\pgfplotsset{
	every tick label/.append style={scale=1},
	every axis/.append style={
	}
}
\def\BState{\State\hskip-\ALG@thistlm}
\newcommand{\rlp}[1]{\left(#1\right)}
\newcommand{\rlbo}[1]{\lefto[#1\right]}
\newcommand{\lefto}{\mathopen{}\left}
\newcommand{\kin}{K_{\mathrm{in}}}
\newcommand{\nb}{n_b}
\newcommand{\nf}{n_f}
\newcommand{\kout}{K_{\mathrm{out}}}
\newcommand{\niop}{\mathbf{n}_{\mathrm{idle}}}
\newcommand{\nbop}{\mathbf{n}_{\mathrm{busy}}}
\newcommand{\rinter}{R_\mathrm{inter}}
\newcommand{\setx}[2]{\mathcal{S}_{#1,#2}}
\newcommand{\lnk}[2]{L_{#1,#2}}
\newcommand{\wls}[2]{W_{#1,#2}}
\newcommand{\setxt}[2]{\tilde{\mathcal{S}}_{#1,#2}}
\newcommand{\pr}{\mathrm{Pr}}
\newcommand{\expe}{\mathrm{E}}
\newcommand{\bp}[1]{b_{#1}}
\newcommand{\bpl}[1]{\tilde b_{#1}}
\newcommand{\BR}{\mathrm{BP}}
\newcommand{\mi}{m_{\mathrm{1}}}
\newcommand{\mj}{m_{\mathrm{2}}}
\newcommand{\mk}{m_{\mathrm{3}}}
\newcommand{\ml}{m_{\mathrm{4}}}
\newcommand{\mm}{m_{\mathrm{5}}}
\newcommand{\nidt}{\mathcal{T}_{\text{inter}}}
\newtheorem{lemma}{Lemma}
\begin{document}

\title{ Exploiting AWG Free Spectral Range Periodicity in Distributed Multicast Architectures }

\author{Kamran~Keykhosravi,\ 
	 Houman Rastegarfar,\ 
	Nasser Peyghambarian,\
	and
	Erik~Agrell
 
	\thanks{K. Keykhosravi and E. Agrell are with the Department of Electrical Engineering, Chalmers University of Technology, Gothenburg 412 96, Sweden (e-mail:
		kamrank@chalmers.se; agrell@chalmers.se).}
		\thanks{H. Rastegarfar and N. Peyghambarian are with the College of Optical Sciences, University of Arizona, Tucson, Arizona 85721, USA (e-mail:
		houman@optics.arizona.edu; nasser@optics.arizona.edu).}
	\thanks{Manuscript received August 14, 2018; revised August 14, 2018.}
	
}

\maketitle

\begin{abstract}
	Modular optical switch architectures combining wavelength routing based on arrayed waveguide grating (AWG) devices and multicasting based on star couplers hold promise for flexibly addressing the exponentially growing traffic demands in a cost- and power-efficient fashion. In a default switching scenario, an input port of the AWG is connected to an output port via a single wavelength.  This can severely limit the capacity between broadcast domains, resulting in interdomain traffic switching bottlenecks. In this paper, we examine the possibility of resolving capacity bottlenecks by exploiting multiple AWG free spectral ranges (FSRs), i.e., setting up multiple parallel connections between each pair of broadcast domains. To this end, we introduce a multi-FSR scheduling algorithm for interconnecting broadcast domains by fairly distributing the wavelength resources among them. We develop a general-purpose analytical framework to study the blocking probabilities in a multistage switching scenario and compare our results with Monte Carlo simulations. Our study points to significant improvements with a moderate increase in the number of FSRs. We show that an FSR count beyond four results in diminishing returns. Furthermore, to investigate the trade-offs between the network- and physical-layer effects, we conduct a cross-layer analysis, taking into account pulse amplitude modulation (PAM) and rate-adaptive forward error correction (FEC). We illustrate how the effective bit rate per port increases with an increase in the number of FSRs. 
	 
\end{abstract}

\begin{IEEEkeywords}
Arrayed waveguide grating (AWG), blocking probability, coupler, free spectral range (FSR), multicast, physical layer, scheduling, switch architecture.
\end{IEEEkeywords}

\IEEEpeerreviewmaketitle

\section{Introduction}


With the proliferation of smart mobile devices, the continuous advances in computational power, and the breakthroughs in the field of machine learning, the fifth generation of cellular networks (5G) is being rolled out to provide dramatic improvements in the throughput, latency, and reliability performance for a myriad of services and applications \cite{agyapong2014design, wang2015backhauling, velasco2017meeting, raza2016demonstration, gowda2016quasi}. With new technologies such as the Internet of Things (IoT), high-resolution video streaming, road safety, wearable devices, and augmented reality, and with increasing capacity demands for large-scale scientific calculations, the network traffic is growing at an exponential pace across all geographical spans. To cope with the ever-increasing traffic rates in a sustainable fashion, innovative and intelligent networking solutions that simultaneously optimize the transmission, architecture, and control and management aspects have become indispensable \cite{ tomkos2013optical, JiajiaChen_15, rastegarfar2017PAM}.

\begin{figure*}[!t]
	\centering
	\begin{tikzpicture}
	\node (ul) at (0,0) {};
	\node (cpp) [right=0cm of ul]{ 	\includegraphics[width=\linewidth]{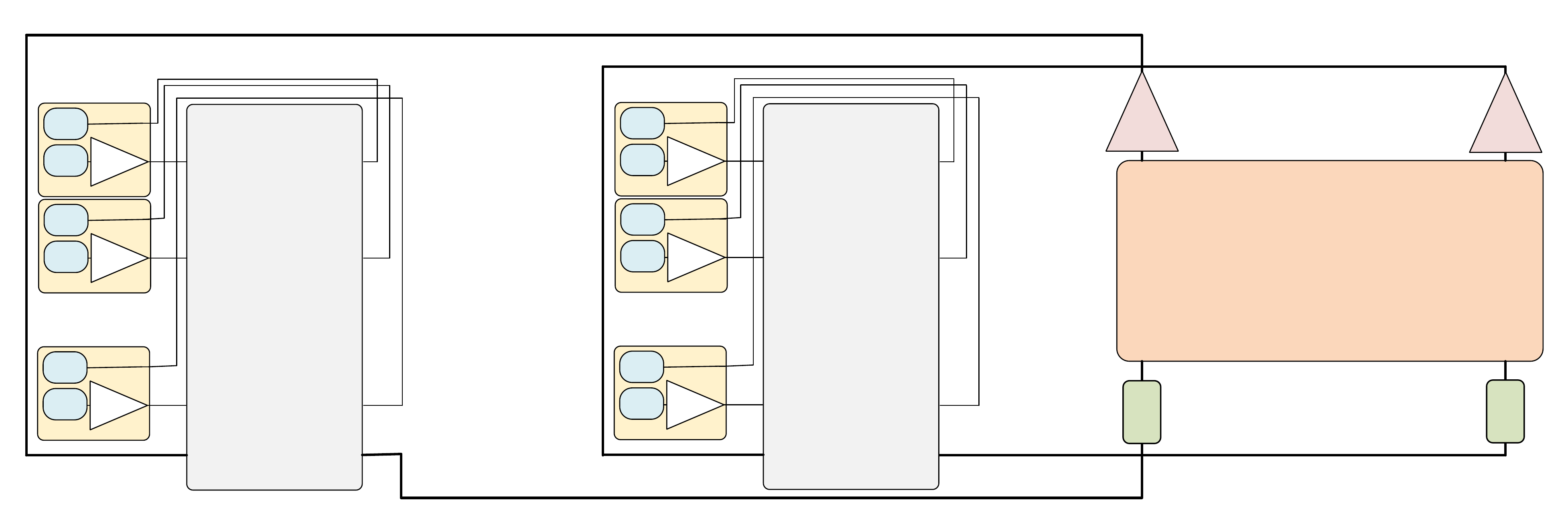}};
	\node (cp)[ right=2.5 of ul]{$K\times K$};
	\node (cp2)[ below=0 of cp]{Coupler};
	\node (n1)[ above=0 of cp]{$\#1$};
	\node (T11)[  below left=-1.61cm and 1.26cm  of cp]{\tiny \textbf{Tx.}};
	\node (R11)[  below left=-2.05cm and 1.26cm  of cp]{\tiny \textbf{Rx.}};
	\node (SOA11)[  right  =-.18cm of T11]{\tiny \textbf{SOA}};
	\node (T12)[  below left=-.5cm and 1.26cm  of cp]{\tiny \textbf{Tx.}};
	\node (R12)[  below left=-0.95cm and 1.26cm  of cp]{\tiny \textbf{Rx.}};
	\node (SOA12)[  right  =-.18cm of T12]{\tiny \textbf{SOA}};
	\node (T13)[  below left=1.2cm and 1.26cm  of cp]{\tiny \textbf{Tx.}};
	\node (R13)[  below left=.77cm and 1.26cm  of cp]{\tiny \textbf{Rx.}};
	\node (SOA13)[  right  =-.18cm of T13]{\tiny \textbf{SOA}};
	\node (cp2)[ right=9.17 of ul]{$K\times K$};
	\node (cp22)[ below=0 of cp2]{Coupler};
	\node (n2)[ above=0 of cp2]{$\#N$};
	\node (T21)[  below left=-1.61cm and 1.26cm  of cp2]{\tiny \textbf{Tx.}};
	\node (R21)[  below left=-2.05cm and 1.26cm  of cp2]{\tiny \textbf{Rx.}};
	\node (SOA21)[  right  =-.18cm of T21]{\tiny \textbf{SOA}};
	\node (T22)[  below left=-.5cm and 1.26cm  of cp2]{\tiny \textbf{Tx.}};
	\node (R22)[  below left=-0.95cm and 1.26cm  of cp2]{\tiny \textbf{Rx.}};
	\node (SOA22)[  right  =-.18cm of T22]{\tiny \textbf{SOA}};
	\node (T23)[  below left=1.2cm and 1.26cm  of cp2]{\tiny \textbf{Tx.}};
	\node (R23)[  below left=.77cm and 1.26cm  of cp2]{\tiny \textbf{Rx.}};
	\node (SOA23)[  right  =-.18cm of T23]{\tiny \textbf{SOA}};
	\node (EDFA1)[ rotate=90, below left=-2.4cm and -10.65cm  of cp]{\tiny \textbf{EDFA}};
	\node (EDFA2)[ rotate=90, below left=-2.4cm and -14.85cm  of cp]{\tiny \textbf{EDFA}};
	\node (WSS1)[ rotate=90, below left=1.05cm and -10.65cm  of cp]{\tiny \textbf{WSS}};
	\node (WSS2)[ rotate=90, below left=1.05cm and -14.85cm  of cp]{\tiny \textbf{WSS}};
	\node (AWG)[ above right=-0.2cm and 15 of ul]{$N\times N$};
	\node (AWG2)[ below=0 of AWG]{AWG};
	\node (dotsCoupler11)[ rotate=90, below left=0.03cm and 1.6cm  of cp]{\dots};
	\node (dotsCoupler12)[ rotate=90, below left=0.03cm and -1.87cm  of cp]{\dots};
	\node (dotsCoupler21)[ rotate=90, below left=0.03cm and 1.6cm  of cp2]{\dots};
	\node (dotsCoupler22)[ rotate=90, below left=0.03cm and -1.87cm  of cp2]{\dots};
	\node (dotsCoupler12)[  below left=0cm and -4cm  of cp]{\Huge\dots};
	\node (dotsAWG1)[  above left=1cm and -1cm  of AWG]{\huge\dots};
	\node (dotsAWG2)[  below=3cm  of dotsAWG1]{\huge\dots};
	\end{tikzpicture}
	
	\caption{\footnotesize A distributed multicast architecture based on star couplers and AWG \cite{rastegarfar2017PAM}. SOA: semiconductor optical amplifier. EDFA: erbium-doped fiber amplifier. WSS: wavelength selective switch.}
	\label{fig:Top}
	
\end{figure*}

The abundant capacity and power efficiency of wavelength-division multiplexed (WDM) networks makes them a promising candidate for interconnecting computing nodes in a data center environment and wireless endpoints in a 5G networking scenario. Optical interconnect designs that support the launch of tens of wavelengths per fiber provide for ultrahigh switching capacities, bit-rate transparency, low power density, resource virtualization flexibility, and acceleration in the execution of large-scale distributed applications. Due to their compelling properties, wavelength-routing interconnects, based on the cyclic routing pattern of arrayed waveguide grating (AWG)  as a passive and low-footprint switching device, have received attention for use in both large-scale data centers \cite{rastegarfar2013cross, sato2013large,rastegarfar2017PAM} and the fronthaul segment of radio access networks \cite{zhang2017low}.

While the switching potential of the AWG is limited by its static point-to-point routing pattern (in an $N\times N$ AWG with $N$ available wavelengths, each input port is connected to each output port with a fixed wavelength), AWG-based switch architectures can be made highly flexible by incorporating optical components with complementary switching capabilities \cite{huang2009wavelength, maier2003arrayed,rastegarfar2017PAM}. For instance, star couplers enable nonblocking unicast, multicast, and broadcast traffic delivery directly in the optical domain and can be added to a wavelength-routing architecture to support a rich set of traffic patterns. Multicast traffic involves the simultaneous dissemination of the same information copy to a group of recipients and constitutes a major portion of data center traffic (e.g., MapReduce) \cite{ZhiyangGuo_14,WenKangJia_14, DanLi_12,DanLi_14,HowardWang_13,PaymanSamadi_15,keykhosravi2018multicast,ni2014poxn}. As well, advanced coordinated multipoint (CoMP) transmission techniques in radio access networks call for efficient optical multicasting from a central office (node) to a group of cooperating radio heads \cite{biermann2013backhaul, zhang2017reconfigurable}.

Recently, a hierarchical, wavelength-routing switch architecture with distributed broadcast domains has been proposed for scalable and flexible optical switching in data centers \cite{rastegarfar2017PAM}. As depicted in Fig.~\ref{fig:Top}, this design interconnects a maximum of $N\times(K-1)$ nodes using an $N\times N$ AWG and $N$ $K \times K$ star couplers. The architecture in Fig.~\ref{fig:Top} allows the same set of wavelengths to be used in each broadcast domain (hence, named the wavelength-reuse architecture) in order to overcome the scalability issues due to the limited coupler port count and transceiver tuning range. Ideally, the nodes attached to different couplers should be able to freely communicate with each other. However, two different factors undermine the scalability of the wavelength-reuse architecture. First, with the number of wavelengths and AWG ports being equal, a given pair of couplers  can only use one wavelength to communicate with each other. Second, the physical-layer impairments that accumulate over multiple routing stages can severely limit the achievable capacity per such a wavelength \cite{rastegarfar2018wavelength}. In order to overcome these segmentation mechanisms, it is crucial to improve the interdomain (i.e., intercoupler) switching capacity as much as possible through innovations at both hardware and scheduling levels. 

In this paper, we aim at resolving the interdomain switching bottlenecks of the wavelength-reuse switch architecture by using multiple AWG free spectral ranges (FSRs). In an $N \times N$ AWG with an FSR count of $F$, each input port can be connected to each output port via $F$ distinct wavelengths \cite{bock2005wdm, bock2005hybrid, xu2012large}. Assuming that the range of available wavelengths is fixed, employing a larger $F$ results in a smaller port count. Hence, there is trade-off between the number of supported nodes and the blocking probability (BP) of the design. We develop a mathematical model to approximate this BP and support our analysis with Monte Carlo simulations. Based on a fair, multi-FSR scheduling algorithm, we quantify the impact of $F$ on the blocking performance. We further evaluate the dependence of the physical layer performance measures, i.e., bit error rate (BER) and effective bit rate per node, on the FSR periodicity in the system, considering pulse amplitude modulation (PAM) in a short-reach transmission scenario. Our results illustrate that the switch performance saturates at $F=4$ and that 4-PAM is the most suitable modulation order for multi-FSR interdomain communications. 

The remainder of this paper is organized as follows. In Section~\ref{s2}, we introduce our multi-FSR scheduling algorithm, enabling a fair distribution of parallel wavelength resources among broadcast domains. In Section~\ref{s4}, we derive an analytical approximation of the BPs in the distributed broadcast architecture. In Section~\ref{sim_res}, we study the impact of FSR periodicity via  Monte Carlo simulations and compare the simulation results with the analytical results obtained in Section~\ref{s4}. Section~\ref{s6} is devoted to studying the impact of the physical layer impairments in a multi-FSR switching scenario. We report BER and normalized throughput values, considering various modulation orders, FSR counts, and adaptive coding. 
Finally, Section~\ref{sec:conclution} summarizes and concludes the paper.

\section{A Multi-FSR Scheduling Algorithm  }\label{s2}

In this section, we propose a scheduling algorithm to enable the utilization of multiple FSRs in the architecture of Fig. \ref{fig:Top}. We assume the AWG port count is $N$ and that the number of available wavelengths is $N_W$. Hence, the number of available FSRs is $F=N_W/N$. The value of $F$ indicates the number of wavelengths that connect an arbitrary AWG input port to each output port. Specifically, the $i$th AWG input port  can be connected to the $j$th AWG output port via wavelengths 
\begin{equation}
fN-\mathrm{mod}\left(1-i-j , N\right), \quad f=1, \dots, F.
\end{equation}
Table~\ref{T1} illustrates the input--output connection map of a $4\times 4$ AWG with $F=4$ ($N_W=16$) as an example. 

\begin{table}[!t]
	\centering
	\caption{Routing pattern of a $4\times 4$ AWG with $F=4$. Each input port is connected to  each output port via  $4$ wavelengths.}
	\label{T1}
	\begin{center}
		\begin{tabular}{|c||*{5}{c|}}\hline
			\diaghead{\theadfont Inputaaa}{In}{Out}
			&$1$&$2$&$3$&$4$\\\hline\hline
			$1$&\shortstack{$\lambda_1,\lambda_5,$\\$\lambda_9,\lambda_{13}$}&\shortstack{$\lambda_2,\lambda_6,$\\$\lambda_{10},\lambda_{14}$}&\shortstack{$\lambda_3,\lambda_7,$\\$\lambda_{11},\lambda_{15}$}&\shortstack{$\lambda_4,\lambda_8,$\\$\lambda_{12},\lambda_{16}$}\\\hline
			$2$&\shortstack{$\lambda_2,\lambda_6,$\\$\lambda_{10},\lambda_{14}$}&\shortstack{$\lambda_3,\lambda_7,$\\$\lambda_{11},\lambda_{15}$}&\shortstack{$\lambda_4,\lambda_8,$\\$\lambda_{12},\lambda_{16}$}&\shortstack{$\lambda_1,\lambda_5,$\\$\lambda_9,\lambda_{13}$}\\\hline
			$3$&\shortstack{$\lambda_3,\lambda_7,$\\$\lambda_{11},\lambda_{15}$}&\shortstack{$\lambda_4,\lambda_8,$\\$\lambda_{12},\lambda_{16}$}&\shortstack{$\lambda_1,\lambda_5,$\\$\lambda_9,\lambda_{13}$}&\shortstack{$\lambda_2,\lambda_6,$\\$\lambda_{10},\lambda_{14}$}\\\hline
			$4$&\shortstack{$\lambda_4,\lambda_8,$\\$\lambda_{12},\lambda_{16},$}&\shortstack{$\lambda_1,\lambda_5,$\\$\lambda_9,\lambda_{13}$}&\shortstack{$\lambda_2,\lambda_6,$\\$\lambda_{10},\lambda_{14}$}&\shortstack{$\lambda_3,\lambda_7,$\\$\lambda_{11},\lambda_{15}$}\\\hline
		\end{tabular}
	\end{center}
\end{table}

Let $\lnk{i}{j}$ denote the link from input port $i$ of the AWG to its output port $j$, where $1\leq i,j\leq N$.   Consider two links $\lnk{i}{j}$ and $\lnk{j}{i}$, where $i\neq j$. Due to the reciprocal property of the AWG, the same set of wavelengths can be used to transmit through $\lnk{i}{j}$ and $\lnk{j}{i}$ . If a wavelength is simultaneously used in both links, collision will occur in couplers $i$ and $j$. Therefore, in order to prevent collisions, the scheduling algorithm should take into account the appropriate allocation of wavelengths. In short, a specific wavelength can only be used   either in $\lnk{j}{i}$ or in $\lnk{i}{j}$ but not in both. As a result, with $F=1$, only one of $\lnk{i}{j}$ and $\lnk{j}{i}$ can be allowed to transmit data. With larger $F$ values, however, the available wavelengths can be split and allocated fairly between the two links.

We define some notations that are used in our scheduling algorithm. Let $\wls{i}{j}$ be the set of all wavelengths that can be used to transmit through $\lnk{i}{j}$. As an example, in the AWG represented in Table~\ref{T1},  $\wls{1}{2}=\{\lambda_2,\lambda_6,\lambda_{10},\lambda_{14}\}$. Moreover,  we define two subsets $\wls{i}{j}^{1}$ and $\wls{i}{j}^{2}$ that have the same cardinality and partition the set $\wls{i}{j}$ into two equivalent sets of wavelength resources, e.g., $\wls{1}{2}^{1}=\{\lambda_2,\lambda_6\}$ and $\wls{1}{2}^{2}=\{\lambda_{10},\lambda_{14}\}$\footnote{Here, we assume that $F$ is an even number.}.

Our multi-FSR scheduling algorithm is based on the generalization of the distributed scheduling algorithm proposed in \cite{rastegarfar2017PAM}. It consists of two phases. First, all the interdomain traffic (i.e., connections whose source and destination nodes reside in different couplers) is scheduled and next the intradomain traffic (i.e., connections whose source and destination nodes reside in the same coupler). The scheduling steps for the first phase are as follows.

\
 
\begin{enumerate}
\item \label{step_inter_1}Begin by considering the traffic requests to all destination nodes in  coupler $d$. For fairness, the starting point $1\leq d\leq N$ is chosen randomly or using a round-robin pointer that is updated in each scheduling cycle.
\item \label{step_inter_2}Choose randomly one of the destination nodes in the $d$th coupler with minimum (and non-zero) number of requests.
 \item\label{step_inter_3} Select randomly a source node in another coupler $s$ requesting that destination.
 \item \label{step_inter_4}If $s>d$, pick one available wavelength in $\wls{s}{d}^{1}$ randomly to schedule the connection and mark this wavelength as unavailable from coupler $d$. Block the request if no available wavelength exists. If $s<d$, use  $\wls{s}{d}^{2}$ instead of $\wls{s}{d}^{1}$.
 \item \label{step_inter_5}Repeat  steps \ref{step_inter_2}, \ref{step_inter_3}, and \ref{step_inter_4} until all requests to coupler $d$ are granted or blocked. Afterwards,  schedule the interdomain requests destined to other couplers than $d$ in the same fashion.
 \item \label{step_inter_6} Restore all the blocked requests. To take advantage of any remaining wavelength resources, perform all the previous steps except Step~\ref{step_inter_4}, which is replaced with Step~{4*} as follows.
 \item[4*)]\label{step_inter_4s} If there exists an available wavelength in  $\wls{s}{d}$, use it to schedule the request and mark it as unavailable in coupler $d$. Otherwise, block it.
\item If all couplers $d$ have been examined, terminate the first scheduling phase and go to the second one. Otherwise, update $d$ and go to Step~\ref{step_inter_2}.
\end{enumerate}

\

With $F=1$,  Step~\ref{step_inter_4} is simply replaced with  Step~{4*} and Step~\ref{step_inter_6}
  is removed\footnote{Note that with $F=1$ the cardinality of $\wls{s}{d}$ is equal to one and it cannot be split into two subsets.}. In this case, the algorithm reduces to the one proposed in \cite[Sec.~II-B]{rastegarfar2017PAM}.
In  Steps~\ref{step_inter_1} to \ref{step_inter_5}, the scheduling is performed by fairly dividing the available wavelengths into  two disjoint sets for transmission from coupler $i$ to coupler $j$ or vice versa. In  Step~\ref{step_inter_6}, to ensure work conserving property (see \cite[Sec.~III-C]{keykhosravi2018multicast}), the connections are scheduled using all of the available wavelengths in each link.  Furthermore, in Step~\ref{step_inter_2} the priority is given to destination nodes with the minimum number of requests to minimize the BP \cite{rastegarfar2017PAM}. 

After scheduling the interdomain traffic, the second phase of the algorithm is carried out to schedule the intradomain traffic. This phase is identical to the algorithm in \cite{rastegarfar2017PAM}; however, we present it here for the sake of completeness. According to Fig. \ref{fig:Top}, with $K \times K$ star couplers, each coupler is directly attached to $K-1$ source and destination nodes. The following intradomain traffic scheduling tasks are carried out in each coupler in parallel.

\

\begin{enumerate}

	\item \label{step_intra_1} Begin the scheduling from a destination node $1\leq o\leq K-1$, where $o$ is chosen randomly or according to a round-robin pointer updated in each scheduling cycle.
	\item\label{step_intra_2} If either the destination node has already been matched or no intradomain traffic is destined to that node, go to Step ~\ref{step_intra_4}. Otherwise, randomly select one of the sources requesting that destination.
	\item \label{step_intra_3}  Schedule the connection via a wavelength that has not been used in the coupler. The wavelength can be picked randomly or based on a first-fit wavelength assignment policy. If all wavelengths are occupied, block the request and terminate the scheduling.
	\item \label{step_intra_4} Repeat Step \ref{step_intra_1} for updating $o$ and Steps \ref{step_intra_2} and \ref{step_intra_3} until all of the destinations have been examined.
\end{enumerate}

\section{Blocking Probability Analysis}\label{s4}
In this section, we present an analytical framework for estimating the (interdomain and intradomain) blocking probabilities  of the distributed multicast architecture in Fig.~\ref{fig:Top}.
Here, the BP is defined as the probability that a single connection request is blocked by the scheduler in each scheduling cycle.
 As we conduct offline scheduling, at a scheduling instance, an input port of the switch has a connection request with probability $\rho$. The parameter $\rho$ can be regarded as average input port utilization or the normalized load of the switch. Furthermore, we let $0\leq\rinter\leq 1$ be the probability that a connection is destined to a  node outside its corresponding broadcast domain. The destination  of  each interdomain (intradomain) connection is chosen uniformly among all of the nonlocal (local) nodes.

To conduct our analytical study, firstly, we derive the BP of a single star coupler (i.e., a strict-sense nonblocking switch) in Section~\ref{Sec:41}. Secondly, in  Sections~\ref{sec:FSR1:analytical} and ~\ref{sec:FSR2:analytical}, we calculate the approximate BP of the interdomain traffic for $F=1$ and $F=2$, respectively.  We do not investigate higher values of $F$ due to the increased complexity of the problem. However, we will show through simulations that for $F \geq4$, the BP of the AWG-based switch can be approximated by that of a single nonblocking switch. Thirdly, in Section~\ref{intra_BR} we derive the intradomain BP. 

\subsection{Blocking probability in a star coupler}\label{Sec:41}
Assume that a switch consisting of a single star coupler simultaneously receives $\kin$ connection requests from distinct input ports. Each request is destined to a port that is chosen randomly, independently, and uniformly among $\kout$ output ports. If multiple requests involve the same output port, one of these requests, chosen randomly and uniformly, is accepted, while all other requests for that output port are blocked. The BP is in this scenario given by the following lemma.

\begin{lemma}\label{lemma:1}\normalfont
Each of the $\kin$ connection requests is blocked with probability
\begin{align}\label{eq:average_block_rate}
\BR(\kin,\kout)=1-\frac{\kout-\expe\rlbo{\niop}}{\kin}.
\end{align}
Here, the random variable $\niop$ is the number of idle output ports, whose average is
\begin{align}\label{eq:idel_output_port}
\expe[\niop]=\kout \rlp{1-\frac{1}{\kout}}^{\kin}.
\end{align}

\end{lemma}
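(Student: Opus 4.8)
The plan is to compute the expected number of idle output ports first, then relate the blocking probability to that quantity via a simple counting/symmetry argument.

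The plan is to compute the expected number of idle output ports first via linearity of expectation, and then recover the blocking probability from a conservation (counting) argument combined with the symmetry of the requests.

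First I would establish \eqref{eq:idel_output_port}. Fix an output port $j$ and let $Y_j$ be the indicator random variable equal to $1$ when port $j$ receives no request and $0$ otherwise. Since each of the $\kin$ requests independently selects port $j$ with probability $1/\kout$, it avoids $j$ with probability $1-1/\kout$, and by independence all $\kin$ requests avoid $j$ simultaneously with probability $(1-1/\kout)^{\kin}$. Hence $\expe[Y_j]=(1-1/\kout)^{\kin}$, and summing over the $\kout$ output ports gives $\expe[\niop]=\sum_{j=1}^{\kout}\expe[Y_j]=\kout(1-1/\kout)^{\kin}$ by linearity of expectation. Note that no independence between distinct output ports is required here---only the per-port marginals---so linearity applies directly.

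Next I would derive \eqref{eq:average_block_rate}. The key observation is that every busy (non-idle) output port accepts exactly one request, so the number of accepted requests equals the number of busy ports, namely $\kout-\niop$. Consequently the number of blocked requests is $\kin-(\kout-\niop)$, and taking expectations yields $\expe[\text{blocked}]=\kin-\kout+\expe[\niop]$. The second ingredient is symmetry: the $\kin$ requests are identically distributed and the tie-breaking rule treats colliding requests symmetrically, so each request is blocked with one and the same probability $\BR$, whence $\expe[\text{blocked}]=\kin\,\BR$. Equating the two expressions for the expected number of blocked requests and solving for $\BR$ gives $\BR=1-(\kout-\expe[\niop])/\kin$, which is \eqref{eq:average_block_rate}.

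The \emph{main obstacle}---in truth the only subtle point---is justifying the symmetry step that converts the expected count of blocked requests into $\kin$ times the per-request blocking probability. This rests on the exchangeability of the requests: each destination is drawn i.i.d.\ uniformly and collisions are resolved uniformly at random, so the event ``request $r$ is blocked'' has a probability that does not depend on the label $r$. Everything else is bookkeeping---the identity ``accepted $=$ busy ports'' follows immediately from the one-accepted-per-port acceptance rule, and \eqref{eq:idel_output_port} is a one-line linearity-of-expectation computation that needs no joint distribution of the $Y_j$.
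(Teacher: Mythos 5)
Your proposal is correct and follows essentially the same route as the paper: the paper likewise computes $\expe[\niop]$ via indicator variables and linearity of expectation, and obtains $\BR$ as $\expe[\kin-(\kout-\niop)]/\kin$. The only difference is that you explicitly justify the exchangeability step equating the per-request blocking probability with the expected blocked fraction, which the paper leaves implicit.
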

\begin{proof}
	The number of blocked connections is $\kin-\nbop$, where $\nbop$ denotes the number of busy output ports, that is, $\kout - \niop$. Therefore, the BP can be calculated as $\expe[\kin-(\kout-\niop)]/\kin$, which is equal to \eqref{eq:average_block_rate}. In the following, we calculate $\expe[\niop]$.
	
	 Define random variable $\mathbf{x}_i, 1\leq i\leq\kout$ to be $1$ if the $i$th output port is idle and $0$ otherwise. The number of idle output ports can be expressed as
	\begin{equation}
		\niop=\sum_{i=1}^{\kout}\mathbf{x}_i.
	\end{equation}  
	Therefore, we have
		\begin{align}\label{eq2}
		\expe[\niop]=\sum_{i=1}^{\kout}\expe\left[\mathbf{x}_i\right].
		\end{align} 
	To calculate $\expe\left[\mathbf{x}_i\right]$, one should note that for all $i$, the $i$th output port is idle  with probability 
	\begin{equation}\label{eq1}
	\pr\left(\mathbf{x}_i=1\right)=\rlp{1-\frac{1}{\kout}}^{\kin}.
	\end{equation}
	Hence, from \eqref{eq2} and \eqref{eq1}, we have
	\begin{align}
	\expe[\niop]&=\kout \rlp{1-\frac{1}{\kout}}^{\kin}.
	\end{align}
\end{proof}
 We note that Lemma~\ref{lemma:1} can be cast into an occupancy problem, where $\kin$ balls (connection requests) are tossed to $\kout$ bins (output ports) (see for example \cite[Ch.~2]{feller1968introduction}).
	

 
Although Lemma~\ref{lemma:1} is formulated to calculate the BP in a star coupler, it can also be used in other nonblocking scenarios. 
We will use  this lemma to derive the BPs of the switch in Fig.~\ref{fig:Top} in the subsequent sections.

\subsection{Interdomain blocking probability with $F=1$}\label{sec:FSR1:analytical}
We consider the case of interdomain routing in the switch of Fig.~\ref{fig:Top} with $F=1$. We denote with $\setx{i}{j}$  the set of all connections whose source and destination lie in couplers $i$ and $j$, respectively. For tractable mathematical analysis, we consider a simplified version of the scheduling algorithm developed in the previous section. With this simplified scheduler, the interdomain connections are scheduled in three steps. First, one connection is chosen at random among each set $\setx{i}{j},\ 1\leq i,j\leq N$, and the rest of the connections are blocked. 
Second, if there exist a connection request from coupler $i$ to coupler $j$ and a connection request from coupler $j$ to $i$, the scheduler randomly chooses one of them and blocks the other\footnote{Note that setting up both connections leads to contention. See Sec.~\ref{s2}.}. Then, each connection is  assigned a wavelength based on the AWG routing pattern. Third, among all non-blocked connections destined to each receiver node, only one  is chosen randomly and the others are blocked (to resolve output port contention). Each receiver is then tuned to the wavelength of its corresponding connection.

Please note that in the original scheduling algorithm (in Sec.~\ref{s2}), the interdomain traffic is scheduled from the output side (i.e, per destination coupler). However, in this analysis we consider a three-stage scheduling, moving from the source to the AWG and then to the destination. Besides, we relax the work-conserving requirement of the scheduler. Let $\bp{1}$, $\bp{2}$, and $\bp{3}$ be the  BP due to the first, second, and third scheduling steps, respectively. The total BP can be calculated as
\begin{equation}
b_{\mathrm{inter}}=1 - (1-\bp{1})(1-\bp{2})(1-\bp{3})\label{btotal_FSR1}.
\end{equation}

In each step of our calculations, we replace all random variables with their mean value to simplify the analysis. As a result, we  assume that $(K-1)\rho$ connections are present in the input ports of each coupler, out of which  
\begin{equation}\label{m1}
\mi=\rinter( K-1) \rho
\end{equation}  
are interdomain connections. 
It is easy to notice that $\bp{1}$ is actually equal to the BP of a coupler with $\mi$ input connections and a destination set of cardinality $N-1$.
Using Lemma~\ref{lemma:1}, we have 
\begin{equation}\label{eq:b1}
\bp{1}\approx\BR(\mi,N-1).
\end{equation}
To calculate $\bp{2}$, we assume  that 
\begin{equation}\label{m2}
\mj=\mi(1-\bp{1})
\end{equation}
connection requests are present on each input port of the AWG.
In the second scheduling step, a connection request from coupler $i$ to coupler $j$ is blocked with probability $1/2$ should there exist a connection request from coupler $j$ to coupler $i$, which happens with probability $\mj/(N-1)$.  Therefore,
\begin{equation}\label{eq:b2}
\bp{2}\approx\frac{\mj}{2(N-1)}. 
\end{equation}
Finally, to calculate  $\bp{3}$, we assume that a total number of $\mk=N\mj(1-\bp{2})$ connections should be scheduled during the third step of the  algorithm. The  destination set of the switch has the cardinality of $N(K-1)$. Here, again the problem can be solved via  Lemma~\ref{lemma:1}. Note that the blocking properties of the switch architecture have already been taken into account in Steps 1 and 2. We obtain
\begin{equation}\label{eq:b3}
\bp{3}\approx\BR(\mk,N(K-1)).
\end{equation}
In \eqref{eq:b3}, for simplification, we neglect  that an interdomain connection is not allowed to be destined to its source domain. This concludes the estimation of the interdomain BP under $F=1$.

\subsection{Interdomain blocking probability with $F=2$}\label{sec:FSR2:analytical}
To calculate the BP with $F=2$, we consider another simplified scheduler that performs the scheduling tasks in Sec.~\ref{sec:FSR1:analytical} twice. 	After  Step~3, the algorithm attempts to schedule the rest of the connections by repeating the first three steps
 as follows.

\

\begin{enumerate}
	\item Randomly choose one connection request from each set $\setx{i}{j}$. We use $\setxt{i}{j}$ to denote the set of  remaining connection requests.
	\item Assign a wavelength   to each of the chosen connections based on their destination. In this step, no connection is blocked as   two wavelengths are available for setting up connections between coupler $i$ and coupler $j$  ($1\leq i,j\leq N$).
	\item For each receiver, one connection is chosen among all the ones destined to it and the rest are blocked.

	\item One connection is randomly selected from each set $\setxt{i}{j}$ and the rest are blocked.
	\item If available, a wavelength is  assigned to each connection according to its destination. Otherwise, the connection is blocked.
	\item For each free receiver node, one connection is selected out of all connections destined to it, and the rest are blocked. Moreover, the connections destined to busy receivers are blocked.   
	
\end{enumerate}

\

 We represent the BP at step $\ell$ by $\bp{\ell}, {1\leq\ell\leq6}$. 
For the first step, $\bp{1}$ can be approximated as in \eqref{eq:b1}. We have $\bp{2}=0$. Similarly to \eqref{eq:b3}, \mbox{$\bp{3}\approx\BR(\mk,N(K-1))$}. The average of the cardinality of set $\setxt{i}{j}$ is $\bp{1}\mi$, where $\mi$ is defined in \eqref{m1}. Therefore, similarly as in \eqref{eq:b1}, 
%
$
\bp{4}\approx\BR(\bp{1}\mi,N-1).
$
%
To approximate $\bp{5}$, we only consider one (the most probable) event, where both wavelengths  for transmission between couplers $i$ and $j$ have been used in the second step, one for transmission from coupler $i$ to $j$ and the other from $j$ to $i$. As discussed in Sec.~\ref{sec:FSR1:analytical}, this probability can be approximated as
%
$
\bp{5}\approx{\mj}/({N-1}),
$
%
where $\mj$ is defined in \eqref{m2}.
 In the sixth step, a connection is blocked either if it is destined to a busy receiver or if it is in contention with other connections. 
The probability of the former event can be approximated by $\bp{6}^{(1)}\approx\ml/(N(K-1))$, where $\ml=N\mi(1-\bp{1})(1-\bp{3})$ is the average number of busy receivers. The probability of the latter event can be calculated similarly as in \eqref{eq:b3} and is 
%
$
\bp{6}^{(2)}\approx\BR(\mm,N(K-1)-\ml),
$
%
where $\mm=N\bp{1}\mi(1-\bp{4})(1-\bp{5})(1-\bp{6}^{(1)})$ is the average number of  connections in Step~$6$ that are destined to free receivers. 
Knowing the BP of each step, the total interdomain BP can be calculated as
\begin{align}
	b_{\mathrm{inter}}&=1 - (1-\bp{1})(1-\bp{2})(1-\bp{3})\nonumber\\
	&\ \ \ \cdot(1-\bp{4})(1-\bp{5})(1-\bp{6}^{(1)})(1-\bp{6}^{(2)}).\label{btotal_FSR2}
\end{align}

\subsection{Intradomain blocking probability  }\label{intra_BR}
After approximating the BP of the interdomain traffic, one can invoke Lemma~\ref{lemma:1} to evaluate that of the intradomain traffic. The average number of busy receivers per coupler, after scheduling the interdomain traffic, can be approximated by
\begin{equation}\label{eq:num:busy}
\nb\approx\mi(1-b_{\mathrm{inter}}).
\end{equation}
The average number of free receivers is $\nf= K-1-\nb$. The intradomain BP consists of two factors: \textit{i)} the probability that a connection is destined to a busy receiver, denoted by $\bpl{1}$ and \textit{ii)} the probability of contention among the connections destined to a free receiver, denoted by $\bpl{2}$. We have that $\bpl{1}\approx\nb/(K-1)$. Moreover, by Lemma~\ref{lemma:1} we have
\begin{equation}\label{eq:bl2}
\bpl{2}\approx\BR\left((1-\rinter) (1-\bpl{1}) (K-1) \rho,\nf\right).
\end{equation}
 Thus, the interdomain BP can be calculated as
\begin{align}
 b_{\mathrm{intra}}&=1 - (1-\bpl{1})(1-\bpl{2})\label{btotal_intra}.
\end{align}

\section{Numerical validation}\label{sim_res}

In this section, we evaluate the blocking probabilities of the distributed multicast architecture via Monte Carlo simulations and compare them with the analytical approximations derived in the previous section for $F\in\{1,2,4,8\}$.   It is assumed that the number of available wavelengths is fixed and equals $N_W=64$. As a result, the AWG port count  scales as $N=64/F$. Each reported value corresponds to the average over $10,000$ simulation runs. Throughout the paper, we set $\rinter=0.25$.
 
 \begin{figure}[!t]
 	\centering
 	\includegraphics{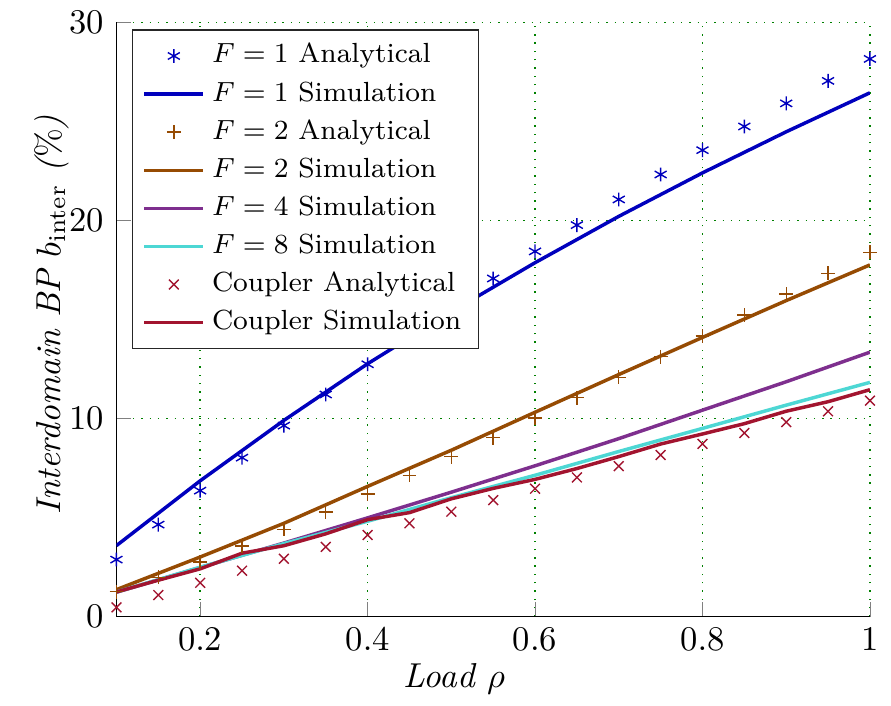}%
 	\caption{ \footnotesize Interdomain BP $b_{\mathrm{inter}}$ of the switch in Fig.~\ref{fig:Top} for $F\in\{1 , 2 , 4 , 8 \}$. Analytical approximations for $F=1$ and $F=2$ are plotted based on \eqref{btotal_FSR1} and \eqref{btotal_FSR2}, respectively.	The BP for a single coupler with $64$ ports,  evaluated via simulations and Lemma~\ref{lemma:1}, is also included for comparison. 
 		 }
 	\label{Fig_analytical}
 \end{figure}

 Figure~\ref{Fig_analytical} represents the interdomain BP versus load.  
Along with the simulation results, the analytical BP values for $F=1$ and $F=2$ are also plotted in Fig.~\ref{Fig_analytical}.
Furthermore, we  include the BP of a single $64\times 64$ coupler. This serves as an approximation for the BP of a switch with $N=2$  couplers ($F=32$), where the  blocking due to the AWG is ignored.
 The BP of the coupler is calculated as 
 \begin{equation}\label{eq:sc}
 b_{\mathrm{coupler}}=\BR\left(\mi,K\right).
 \end{equation}

 As depicted in Fig.~\ref{Fig_analytical}, the analytical results are in close agreement with the simulation results. The small differences are mainly due to two simplifications that we made in our analysis. First, instead of considering the distribution of the random variables,  we considered their expected value, in each step. Second, we analyzed a simplified scheduling algorithm, while the one presented in Sec.~\ref{s2} was simulated.

An interdomain connection is blocked for two reasons, i.e., wavelength shortage in the AWG or output port contention.  The former is more significant for small values of $F$. With an increase in $F$, more wavelengths become available to connect AWG input and output ports. The BP due to wavelength shortage can be significantly reduced with a proper choice of FSR count. According to Fig.~\ref{Fig_analytical}, increasing $F$ past 4 has diminishing returns. 
 It can also be seen that the BP under $F\geq4$ can be approximated by that of a single coupler, since in this scenario blocking becomes contention-dominant.

\begin{figure}[!t]
	\centering
	\includegraphics{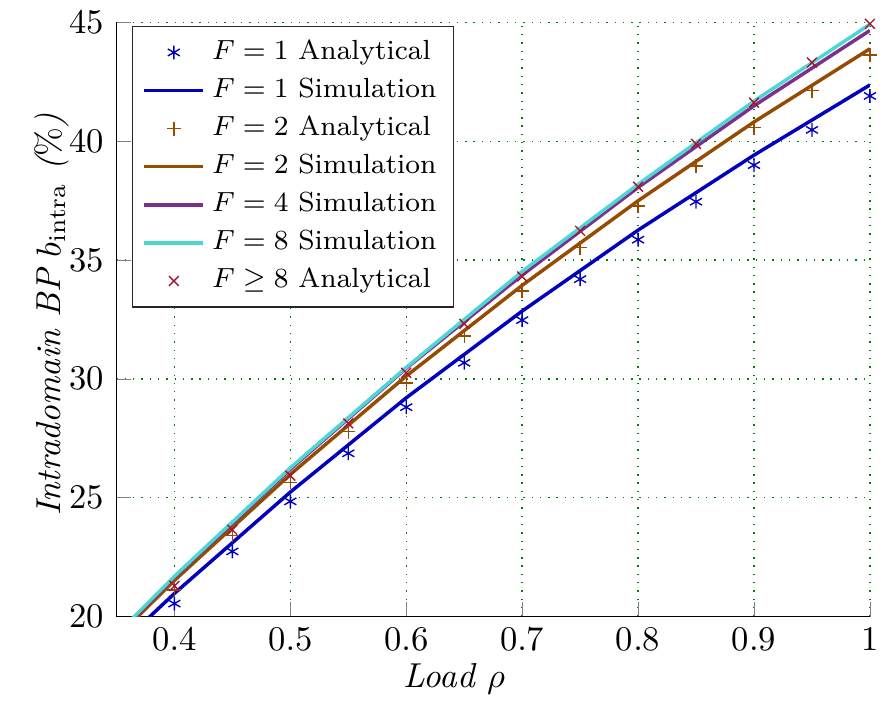}%
	\caption{ \footnotesize Intradomain BP $b_{\mathrm{intra}}$ of the switch in Fig.~\ref{fig:Top} for $F\in\{1 , 2 , 4 , 8 \}$. Analytical approximations for $F=1$, $F=2$ as well as for $F\geq 8$ are also plotted.}
	\label{Fig_analytical_local}
\end{figure}

Figure~\ref{Fig_analytical_local} depicts the simulation results in terms of intradomain BP for  $F = 1 , 2 , 4 ,  8$.  Analytical approximations based on Sec.~\ref{intra_BR} are also provided for $F=1$ and $F=2$. Moreover, for large values of $F$ (i.e., $F\geq8$), we include the intradomain BP based on \eqref{eq:num:busy}--\eqref{btotal_intra}, where  \eqref{eq:sc} was used to calculated $b_{\mathrm{inter}}$ in \eqref{eq:num:busy}. As in the case of interdomain traffic, a good agreement exists between the simulation and analytical results. As depicted in Fig.~\ref{Fig_analytical}, an increase in $F$ results in lower interdomain blocking probabilities; hence, a larger portion of receivers become occupied by the interdomain traffic. This explains why the intradomain BP degrades with an increase in $F$. Besides, for a given load, the intradomain traffic suffers a higher BP compared with the interdomain traffic. This is primarily due to the fact that the multi-FSR scheduler prioritizes the interdomain connections by trying to allocate them resources first.

\section{Cross-Layer Performance Analysis}\label{s6}

\begin{figure*}[!t]
	\centering
	\begin{tikzpicture}
	\node (ul) at (0,0) {};
	\node (cpp) [right=0cm of ul]{ 	\includegraphics[scale=.6]{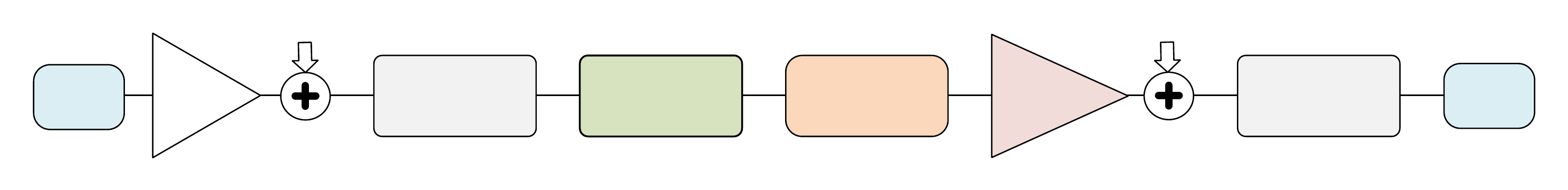}};
	\node (cp)[ right=4.6 of ul]{Coupler};
	\node (cp)[ right=7.2 of ul]{WSS};
	\node (cp)[ right=9.6 of ul]{AWG};
	\node (cp)[ right=1.9 of ul]{SOA};
	\node (cp)[ right=.7 of ul]{Tx.};
	\node (cp)[ right=11.6 of ul]{EDFA};
	\node (cp)[ right=14.5 of ul]{Coupler};
	\node (cp)[ right=17 of ul]{Rx.};
	\node (cp)[ above right=.5cm and 13 of ul]{$\mathbf{n}_{\text{EDFA}}$};
	\node (cp)[ above right=.5cm and 3.3 of ul]{$\mathbf{n}_{\text{SOA}}$};	
	\node (cp)[ above right=-1.7cm and 8.8 of ul]{(a)};
	\node (ul) [ below=2cm of ul]{};
	\node (cpp) [right=0cm of ul]{ 	\includegraphics[scale=.6]{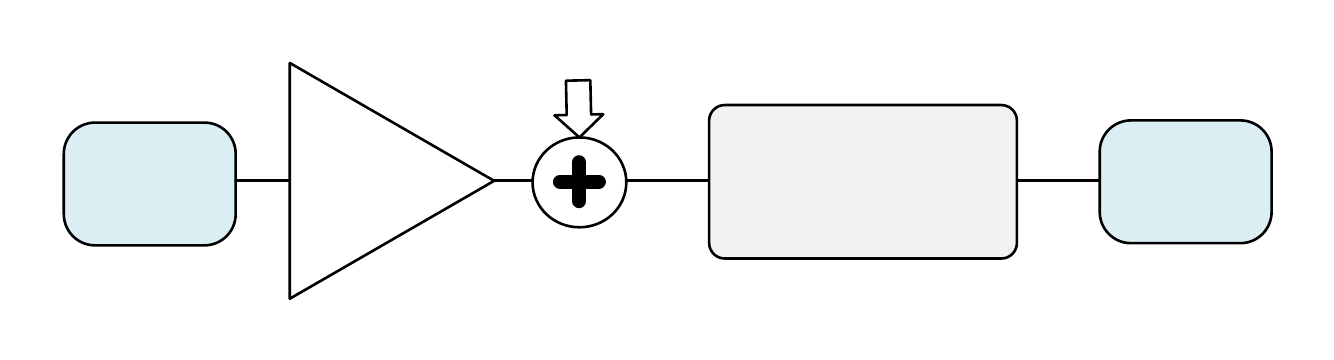}};
	\node (cp)[ right=4.6 of ul]{Coupler};
	\node (cp)[ right=1.9 of ul]{SOA};
	\node (cp)[ right=.7 of ul]{Tx.};
	\node (cp)[ right=7 of ul]{Rx.};
	\node (cp)[ above right=.5cm and 3.3 of ul]{$\mathbf{n}_{\text{SOA}}$};
	\node (cp)[ above right=-1.7cm and 8.8 of ul]{(b)};
	\end{tikzpicture}
	
	\caption{\footnotesize Transmission path  of (a) interdomain and (b) intradomain traffic for the switch in Fig.~\ref{fig:Top}. }
	\label{fig:Pass}
	
\end{figure*}

A signal traversing the different routing stages of the distributed multicast architecture is affected by multiple impairments, namely thermal noise, laser relative intensity noise, shot noise, amplified spontaneous emission (ASE) noise, in-band AWG crosstalk, and out-of-band crosstalk. Therefore, the transmitted symbols are detected at the receiver with errors. Fig.~\ref{fig:Pass} illustrates the signal path from the transmitter to its destination for  interdomain  and  intradomain  traffic. The ASE noises of the amplifiers are shown at the output of the amplification stages. A comprehensive physical-layer model for pulse amplitude modulation (PAM) has been developed in \cite{rastegarfar2017PAM}. We  use the same analytical setup with the same physical-layer parameters (see \cite[Table~I]{rastegarfar2017PAM}) to model the signal propagation and calculate the BER. As in \cite{rastegarfar2017PAM}, we conduct Monte Carlo simulations to evaluate the effects of crosstalk on the signal. Some of the key simulation parameters are presented in Table~\ref{table:parameters}.

\begin{table}[!t]
	\renewcommand{\arraystretch}{1.3}
	\caption{ Simulation parameters }
	\label{table:parameters}
	\centering
	\begin{tabular}{c  c }
		\hline
		\hline
		Parameter & Value\\
		\hline
		Number of simulation runs  & $10,000$\\
		FSR count ($F$)&$1, 2, 4, 8$\\
		Coupler port count  ($K$)&$64$\\
		Wavelength count ($N_W$)  &$64$\\
		AWG port count  ($N$)&$64/F$\\
		Symbol rate  & $28$ Gbaud\\
		$\rinter$  & $0.25$ \\
		\hline
		\hline
	\end{tabular}
\end{table}

\begin{figure*}[!t]
	\centering
	\subfloat[]{\includegraphics{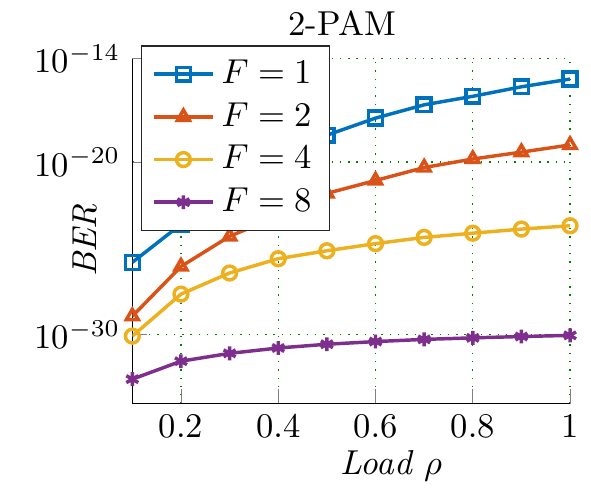}%
		\label{BER_2PAM}}
	\hfil
	\subfloat[]{\includegraphics{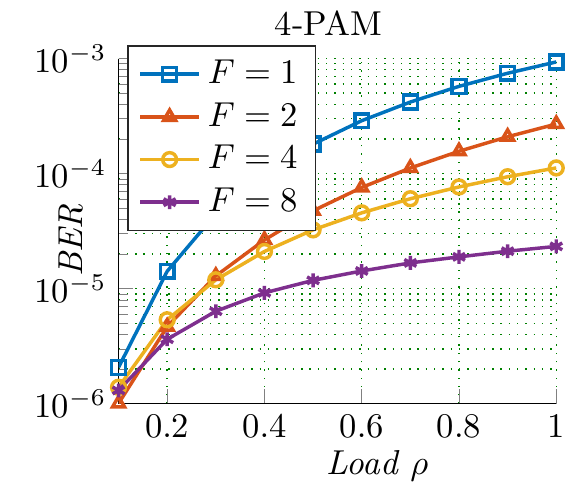}%
		\label{BER_4PAM}}
	\hfil
	\subfloat[]{\includegraphics{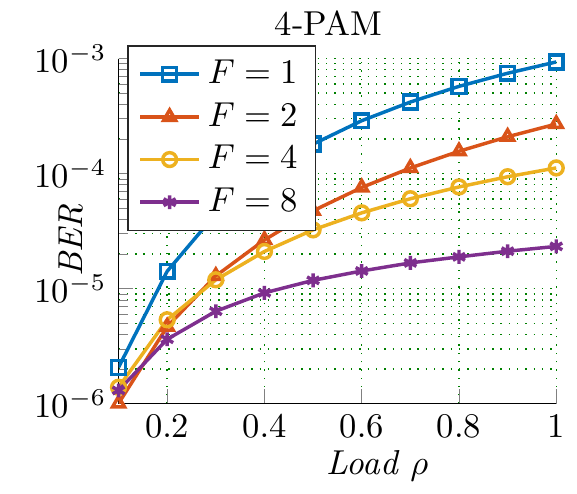}%
		\label{BER_8PAM}}
	\caption{ \footnotesize BER for (a) 2-PAM, (b) 4-PAM, and (c) 8-PAM  for $F = 1 , 2 , 4 , 8$. 
	}
	\label{BER}
\end{figure*}

Figure~\ref{BER} illustrates the (overall) BER versus load for three modulation orders and  $F = 1 , 2 , 4 , 8$.  We can distinguish three trends in Fig.~\ref{BER}. First, an increase in $F$ results in a decrease in BER. This is because for smaller values of $F$, a larger AWG port count is considered (refer to Table~\ref{table:parameters}), and therefore, the interdomain traffic is affected by stronger in-band crosstalk. Second, the BER monotonically increases with load, which is due to the intensified crosstalk originating from more co-propagating channels. Third, increasing the order of the modulation increases the BER. With $8$-PAM, the BER can exceed $10^{-2}$ while with $2$-PAM (i.e., \mbox{on--off}  keying), the transmission is virtually error-free.

To investigate the impact of physical-layer impairments on  the transmission rate for different modulation orders, we deploy a forward error correction (FEC) code with rate adaptation.  We use a Reed--Solomon code with block length of $255$ bytes \cite{ebel1995performance}, i.e., RS$(255,k)$, where $k$ is chosen by the switch controller after calculating the pre-FEC BER such that the  post-FEC BER becomes less than $10^{-12}$. The larger the pre-FEC BER, the smaller the value of $k$, and the lower the effective bit rate per transmitter. We assume that if the pre-FEC BER is larger than $3\cdot10^{-2}$, the signal cannot be retrieved at the receiver.  We focus on the interdomain traffic as it is more susceptible to the impairments than the intradomain traffic. 

\begin{figure*}[!t]
	\centering
	
	\subfloat[]{\includegraphics{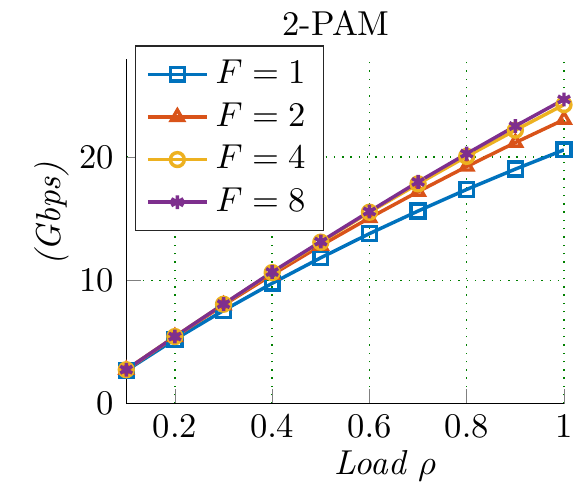}%
		\label{NIT_2PAM}}
	\hfil
	\subfloat[]{\includegraphics{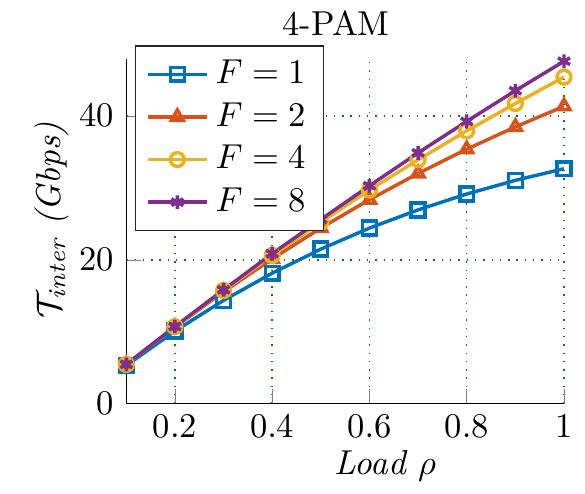}%
		\label{NIT_4PAM}}
	\hfil
	\subfloat[]{\includegraphics{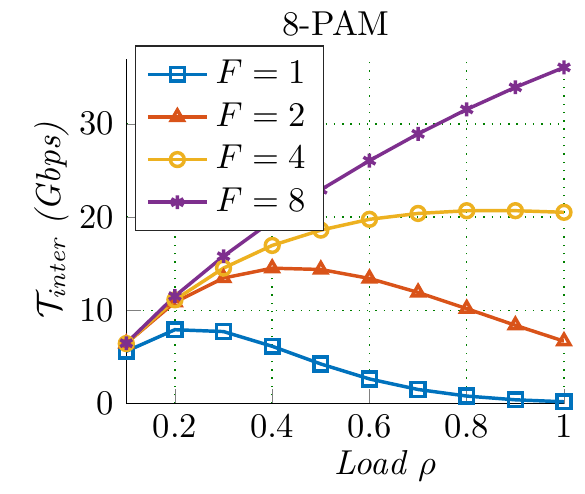}%
		\label{NIT_8PAM}}
	\caption{ \footnotesize $\nidt$ (see \eqref{eq_tau}) for (a) 2-PAM, (b) 4-PAM, and (c) 8-PAM for $F = 1 , 2 , 4 , 8$. 
	}
	\label{NIT}
\end{figure*}
Figure~\ref{NIT} depicts $\nidt$, that is the  average interdomain throughput per node  normalized by $\rinter$, versus load for $F\in\{ 1 , 2 , 4 , 8 \}$. 
 Specifically, the vertical axis in Fig.~\ref{NIT} represents
  \begin{equation}\label{eq_tau}
  \nidt=\frac{\text{Average total interdomain traffic }}{  N(K-1)  \times \rinter}.
  \end{equation}
For each modulation order, $\nidt$ grows with $F$. As is evident from Fig.~\ref{BER}, the larger the value of $F$, the lower the pre-FEC BER; hence, a larger throughput. Increasing the load has two opposing effects on $\nidt$. First, with higher loads, more connections are set up, translating to higher throughput. Second, the BER increases with load (see Fig.~\ref{BER}), and consequently the code rate and throughput decrease.   With $2$-PAM and $4$-PAM, the first effect dominates and the throughput constantly increases with load. However, with $8$-PAM, the second effect becomes dominant for $F=1,2$ under large enough loads.

Using higher-order modulations  \textit{i)} increases the number of transmitted bits per symbol and consequently the throughput, and \textit{ii)} increases the pre-FEC BER, which in turn decreases the code rate and throughput. As depicted in Fig.~\ref{NIT}, by moving from $2$-PAM to $4$-PAM, the  throughput increases as the first effect dominates. However, comparing Fig.~\ref{NIT}\,\subref{NIT_4PAM} and Fig.~\ref{NIT}\,\subref{NIT_8PAM}, one can note that for a given load, the throughput decreases by moving from $4$-PAM to $8$-PAM. Therefore, $4$ is the best modulation order to be used by the interdomain traffic in the multi-FSR transmission scenario.



\section{ Conclusion}\label{sec:conclution}

To address the switching bottlenecks imposed by the fixed AWG routing pattern, we investigated the impact of AWG FSR periodicity on the performance of a distributed multicast switch architecture. We developed a general-purpose, analytical framework to estimate the BP in a multistage setting. In addition, we conducted Monte Carlo simulations to study the performance under different values of the FSR count. Considering the parameters of our study, the interdomain BP could be significantly improved by increasing the FSR count from one to four, with larger values resulting in diminishing returns and further reducing the number of supported nodes.

From a physical-layer standpoint, an increase in the FSR count leads to a decrease in BER and a larger effective bit rate per connection. In our cross-layer simulations, $4$-PAM led to the highest normalized interdomain throughput for all of the considered FSR counts. 
 Resolving the broadcast domain partitioning problem by using multiple FSRs, our future work involves the development of new scheduling algorithms giving equal priority to intradomain and interdomain traffic.

\section*{Acknowledgment}
This work was supported by the Swedish Research Council under grant no. 2014-6230, the NSF Center for Integrated Access Networks (CIAN) under grant no. EEC-0812072, and the Natural Sciences and Engineering Research Council of Canada (NSERC).
The simulations were carried out on the resources provided by the Swedish National Infrastructure for Computing (SNIC) at C3SE.


\end{document}